\def\xrightarrowtrianglefill@{\arrowfill@\relbar\relbar\rightarrowtriangle}
\providecommand{\xrightarrowtriangle}[2][]{%
  \ext@arrow 0099\xrightarrowtrianglefill@{#1}{#2}}
\newcommand*\cf{\textit{cf.}}
\newcommand*\etc{\textit{etc.}}
\newcommand*\ie{\textit{i.e.},}
\newcommand*\Real{\mathbbm{R}}
\newcommand*\Realnn{\Real_{ \ge 0}}
\newcommand*\Realni{\Real_*}
\newcommand*\tto[1]{\xrightarrow{#1}}
\newcommand*\mcal[1]{\mathcal #1}
\newcommand*\conf{\textup{\textsf{Conf}}}
\newcommand*\pend{\textup{\textsf{end}}}
\newcommand*\val{\textup{\textsf{val}}}
\newcommand*\ato[1]{\xrightarrowtriangle{#1}}
\newcommand*\mato{\mathord{\ato{}}}
\newcommand*\cpre{\textup{\textsf{cpre}}}
\newcommand*\sco{\textup{\textsf{sc}}}
\newcommand*\Nat{\mathbbm{N}}
\title{Computing Branching Distances \\ Using Quantitative Games}
\author{Uli Fahrenberg\inst1\thanks{This author's work is supported by
    the \textit{Chaire ISC~: Engineering Complex Systems} --
    \smash{\'E}cole polytechnique -- Thales -- FX -- DGA -- Dassault
    Aviation -- DCNS Research -- ENSTA ParisTech -- T{\'e}l{\'e}com
    ParisTech} \and Axel Legay\inst2\inst3 \and Karin Quaas\inst4}
\institute{{\'E}cole polytechnique, Palaiseau, France \and
  Universit{\'e} catholique de Louvain, Belgium \and Aalborg
  University, Denmark \and Universit{\"a}t Leipzig, Germany}
\begin{document}

\maketitle

\begin{abstract}
  We lay out a general method for computing branching distances
  between labeled transition systems.  We translate the quantitative
  games used for defining these distances to other, path-building
  games which are amenable to methods from the theory of quantitative
  games.  We then show for all common types of branching distances how
  the resulting path-building games can be solved.  In the end, we
  achieve a method which can be used to compute all branching
  distances in the linear-time--branching-time spectrum.

  \keywords{%
    Quantitative verification, branching distance, quantitative game,
    path-building game}
\end{abstract}

\section{Introduction}

During the last decade, formal verification has seen a trend towards
modeling and analyzing systems which contain quantitative information.
This is motivated by applications in real-time systems, hybrid systems,
embedded systems and others.  Quantitative information can thus be a
variety of things: probabilities, time, tank pressure, energy intake,
\etc

A number of quantitative models have hence been developed:
probabilistic automata~\cite{DBLP:conf/concur/SegalaL94}, stochastic
process algebras~\cite{book/Hillston96}, timed
automata~\cite{DBLP:journals/tcs/AlurD94}, hybrid
automata~\cite{DBLP:journals/tcs/AlurCHHHNOSY95}, timed variants of
Petri nets~\cite{journal/transcom/MerlinF76, DBLP:conf/apn/Hanisch93},
continuous-time Markov chains~\cite{book/Stewart94}, \etc\; Similarly,
there is a number of specification formalisms for expressing
quantitative properties: timed computation tree
logic~\cite{DBLP:journals/iandc/HenzingerNSY94}, probabilistic
computation tree logic~\cite{DBLP:journals/fac/HanssonJ94}, metric
temporal logic~\cite{DBLP:journals/rts/Koymans90}, stochastic
continuous logic~\cite{DBLP:journals/tocl/AzizSSB00}, \etc

Quantitative verification, \ie~the checking of quantitative properties
for quantitative systems, has also seen rapid development: for
probabilistic systems in PRISM~\cite{DBLP:conf/tacas/KwiatkowskaNP02}
and PEPA~\cite{DBLP:conf/cpe/GilmoreH94}, for real-time systems in
Uppaal~\cite{DBLP:journals/sttt/LarsenPY97},
RED~\cite{DBLP:conf/fm/WangME93}, TAPAAL~\cite{DBLP:conf/atva/BygJS09}
and Romeo~\cite{DBLP:conf/cav/GardeyLMR05}, and for hybrid systems in
HyTech~\cite{DBLP:journals/sttt/HenzingerHW97},
SpaceEx~\cite{DBLP:conf/cav/FrehseGDCRLRGDM11} and
HySAT~\cite{DBLP:journals/fmsd/FranzleH07}, to name but a few.

Quantitative verification has, however, a problem of
\emph{robustness}.  When the answers to model checking problems are
Boolean---either a system meets its specification or it does
not---then small perturbations in the system's parameters may
invalidate the result.  This means that, from a model checking point
of view, small, perhaps unimportant, deviations in quantities are
indistinguishable from larger ones which may be critical.

As an example, Fig.~\ref{fi:tatrain} shows three simple
timed-automaton models of a train crossing, each modeling that once
the gates are closed, some time will pass before the train arrives.
Now assume that the specification of the system is
\begin{equation*}
  \textit{The gates have to be closed 60 seconds before the train
    arrives.}
\end{equation*}
Model $A$ does guarantee this property, hence satisfies the
specification.  Model $B$ only guarantees that the gates are closed 58
seconds before the train arrives, and in model $C$, only one second
may pass between the gates closing and the train.

Neither of models $B$ and $C$ satisfies the specification, so this is
the result which a model checker like for example Uppaal would output.
What this does not tell us, however, is that model $C$ is dangerously
far away from the specification, whereas model $B$ only violates it
slightly (and may be acceptable from a practical point of view given
other constraints on the system which we have not modeled here).

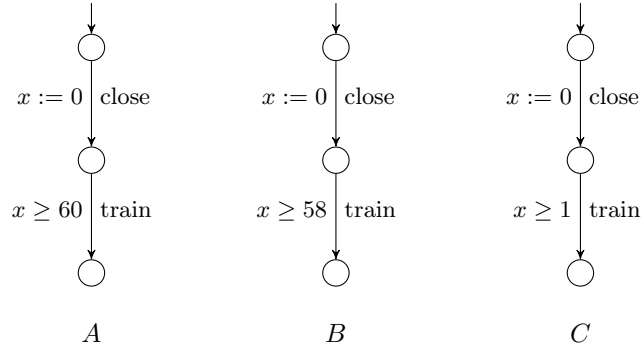
\begin{figure}[tbp]
  \centering
  \begin{tikzpicture}[->,>=stealth',auto,initial text=]
    \tikzstyle{every node}=[font=\small]
    \tikzstyle{every state}=[inner sep=.5mm,minimum size=3.5mm]
    \begin{scope}
      \node[state,initial above] (0) at (0,0) {};
      \node[state] (1) at (0,-1.5) {};
      \node[state] (2) at (0,-3) {};
      \path (0) edge node [left,anchor=base east] {$x:= 0$} node
      [right,anchor=base west] {close} (1);
      \path (1) edge node [left,anchor=base east] {$x\ge 60$} node
      [right,anchor=base west] {train} (2);
      \node[font=\normalsize] at (0,-3.8) {$A$};
    \end{scope}
    \begin{scope}[xshift=10em]
      \node[state,initial above] (0) at (0,0) {};
      \node[state] (1) at (0,-1.5) {};
      \node[state] (2) at (0,-3) {};
      \path (0) edge node [left,anchor=base east] {$x:= 0$} node
      [right,anchor=base west] {close} (1);
      \path (1) edge node [left,anchor=base east] {$x\ge 58$} node
      [right,anchor=base west] {train} (2);
      \node[font=\normalsize] at (0,-3.8) {$B$};
    \end{scope}
    \begin{scope}[xshift=20em]
      \node[state,initial above] (0) at (0,0) {};
      \node[state] (1) at (0,-1.5) {};
      \node[state] (2) at (0,-3) {};
      \path (0) edge node [left,anchor=base east] {$x:= 0$} node
      [right,anchor=base west] {close} (1);
      \path (1) edge node [left,anchor=base east] {$x\ge 1$} node
      [right,anchor=base west] {train} (2);
      \node[font=\normalsize] at (0,-3.8) {$C$};
    \end{scope}
  \end{tikzpicture}
  \caption{%
    \label{fi:tatrain}
    Three timed automata modeling a train crossing.
  }
\end{figure}

In order to address the robustness problem, one approach is to replace
the Boolean yes-no answers of standard verification with distances.
That is, the Boolean co-domain of model checking is replaced by the
non-negative real numbers.  In this setting, the Boolean \texttt{true}
corresponds to a distance of zero and \texttt{false} to the non-zero
numbers, so that quantitative model checking can now tell us not only
that a specification is violated, but also \emph{how much} it is
violated, or \emph{how far} the system is from corresponding to its
specification.

In the example of Fig.~\ref{fi:tatrain}, and depending on precisely
how one wishes to measure distances, the distance from $A$ to our
specification would be $0$, whereas the distances from $B$ and $C$ to
the specification may be $2$ and $59$, for example.  The precise
interpretation of distance values will be application-dependent; but
in any case, it is clear that $C$ is much farther away from the
specification than $B$ is.

The distance-based approach to quantitative verification has been
developed in~\cite{DBLP:journals/tcs/DesharnaisGJP04,
  DBLP:journals/tcs/BreugelW05, DBLP:journals/tcs/AlfaroFHMS05,
  DBLP:conf/formats/HenzingerMP05, DBLP:journals/tac/GirardP07,
  DBLP:journals/tcs/Breugel01, DBLP:journals/jlp/ThraneFL10} and many
other papers.  Common to all these approaches is that they introduce
distances between systems, or between systems and specifications, and
then employ these for approximate or quantitative verification.
However, depending on the application context, a plethora of different
distances are being used.  Consequently, there is a need for a general
theory of quantitative verification which depends as little as
possible on the concrete distances being used.

Different applications foster different types of quantitative
verification, but it turns out that most of these essentially measure
some type of distances between labeled transition systems.  We have
in~\cite{DBLP:journals/tcs/FahrenbergL14} laid out a unifying
framework which allows one to reason about such distance-based
quantitative verification independently of the precise distance.  This
is essentially a general metric theory of labeled transition systems,
with infinite quantitative games as its main theoretical ingredient
and general fixed-point equations for linear and branching distances
as one of its main results.

The work in~\cite{DBLP:journals/tcs/FahrenbergL14} generalizes the
linear-time--branching-time spectrum of preorders and equivalences
from van~Glabbeek's~\cite{inbook/hpa/Glabbeek01} to a quantitative
linear-time--branching-time spectrum of distances, all parameterized
on a given distance on traces, or executions;
\cf~Fig.~\ref{fi:spectrum}.  This is done by generalizing Stirling's
bisimulation game~\cite{DBLP:conf/banff/Stirling95} along two
directions, both to cover all other preorders and equivalences in the
linear-time--branching-time spectrum and into a game with quantitative
(instead of Boolean) objectives.

\begin{figure}[p]
  \centering
  \begin{tikzpicture}[->,xscale=.85,yscale=1.1]
    \tikzstyle{every node}=[font=\small,text badly centered]
    \node (traceeq) at (0,.3) {$\infty$-nested trace equivalence};
    \node (k+1-r-trace) at (-2,-2.4) {$( k+ 1)$-nested ready
      inclusion};
    \node (k+1-traceeq) at (2,-1.6) {$( k+ 1)$-nested trace equivalence};
    \node (k-r-traceeq) at (-2,-4.4) {$k$-nested ready equivalence};
    \node (k+1-trace) at (2,-3.6) {$( k+ 1)$-nested trace inclusion};
    \node (k-r-trace) at (-2,-6.4) {$k$-nested ready inclusion};
    \node (k-traceeq) at (2,-5.6) {$k$-nested trace equivalence};
    \node (2-r-trace) at (-2,-9.4) {$2$-nested ready inclusion};
    \node [text width=11.6em] (2-traceeq) at (2,-8.6) {$2$-nested trace
      equivalence \\ \emph{possible-futures equivalence}};
    \node [text width=11.4em] (1-r-traceeq) at (-2,-11.4) {$1$-nested ready
      equivalence \\ \emph{ready equivalence}};
    \node [text width=11.5em] (2-trace) at (2,-10.6) {$2$-nested trace
      inclusion \\ \emph{possible-futures inclusion}};
    \node [text width=11em] (1-r-trace) at (-2,-13.4) {$1$-nested ready
      inclusion \\ \emph{ready inclusion}};
    \node [text width=11.9em] (1-traceeq) at (2,-12.6) {$1$-nested trace
      equivalence \\ \emph{trace equivalence}};
    \node [text width=10.7em] (1-trace) at (2,-14.6) {$1$-nested trace
      inclusion \\ \emph{trace inclusion}};
    \node [text width=16em] (bisim) at (7,.3) {$\infty$-nested
      simulation equivalence \\ \emph{bisimulation}};
    \node (k+1-r-sim) at (5,-2.4) {$( k+ 1)$-ready sim.~equivalence};
    \node [text width=9em] (k+1-simeq) at (9,-1.6) {$( k+ 1)$-nested
      sim. equivalence};
    \node (k-r-simeq) at (5,-4.4) {$k$-nested ready sim.~equivalence};
    \node (k+1-sim) at (9,-3.6) {$( k+ 1)$-nested simulation};
    \node (k-r-sim) at (5,-6.4) {$k$-nested ready simulation};
    \node (k-simeq) at (9,-5.6) {$k$-nested sim.~equivalence};
    \node (2-r-sim) at (5,-9.4) {$2$-nested ready simulation};
    \node (2-simeq) at (9,-8.6) {$2$-nested sim.~equivalence};
    \node [text width=14.5em] (1-r-simeq) at (5,-11.4) {$1$-nested ready
      sim.~equivalence \\ \emph{ready simulation equivalence}};
    \node (2-sim) at (9,-10.6) {$2$-nested simulation};
    \node [text width=12em] (1-r-sim) at (5,-13.4) {$1$-nested ready
      simulation \\ \emph{ready simulation}};
    \node [text width=12em] (1-simeq) at (9,-12.6) {$1$-nested
      sim.~equivalence \\ \emph{simulation equivalence}};
    \node [text width=10em] (1-sim) at (9,-14.6) {$1$-nested
      simulation \\ \emph{simulation}};
    %
    \path (bisim) edge (traceeq);
    \path (k+1-r-sim) edge (k+1-r-trace);
    \path (k+1-simeq) edge (k+1-traceeq);
    \path (k-r-simeq) edge (k-r-traceeq);
    \path (k+1-sim) edge (k+1-trace);
    \path (k-r-sim) edge (k-r-trace);
    \path (k-simeq) edge (k-traceeq);
    \path (2-simeq) edge (2-traceeq);
    \path (2-r-sim) edge (2-r-trace);
    \path (2-sim) edge (2-trace);
    \path (1-r-simeq) edge (1-r-traceeq);
    \path (1-simeq) edge (1-traceeq);
    \path (1-r-sim) edge (1-r-trace);
    \path (1-sim) edge (1-trace);
    \path [dashed] (traceeq) edge (k+1-r-trace);
    \path [dashed] (traceeq) edge (k+1-traceeq);
    \path (k+1-r-trace) edge (k-r-traceeq);
    \path (k+1-r-trace) edge (k+1-trace);
    \path (k+1-traceeq) edge (k-r-traceeq);
    \path (k+1-traceeq) edge (k+1-trace);
    \path (k-r-traceeq) edge (k-r-trace);
    \path (k-r-traceeq) edge (k-traceeq);
    \path (k+1-trace) edge (k-r-trace);
    \path (k+1-trace) edge (k-traceeq);
    \path [dashed] (k-r-trace) edge (2-r-trace);
    \path [dashed] (k-r-trace) edge (2-traceeq);
    \path [dashed] (k-traceeq) edge (2-r-trace);
    \path [dashed] (k-traceeq) edge (2-traceeq);
    \path (2-r-trace) edge (1-r-traceeq);
    \path (2-r-trace) edge (2-trace);
    \path (2-traceeq) edge (1-r-traceeq);
    \path (2-traceeq) edge (2-trace);
    \path (1-r-traceeq) edge (1-r-trace);
    \path (1-r-traceeq) edge (1-traceeq);
    \path (2-trace) edge (1-r-trace);
    \path (2-trace) edge (1-traceeq);
    \path (1-r-trace) edge (1-trace);
    \path (1-traceeq) edge (1-trace);
    \path [dashed] (bisim) edge (k+1-r-sim);
    \path [dashed] (bisim) edge (k+1-simeq);
    \path (k+1-r-sim) edge (k-r-simeq);
    \path (k+1-r-sim) edge (k+1-sim);
    \path (k+1-simeq) edge (k-r-simeq);
    \path (k+1-simeq) edge (k+1-sim);
    \path (k-r-simeq) edge (k-r-sim);
    \path (k-r-simeq) edge (k-simeq);
    \path (k+1-sim) edge (k-r-sim);
    \path (k+1-sim) edge (k-simeq);
    \path [dashed] (k-r-sim) edge (2-r-sim);
    \path [dashed] (k-r-sim) edge (2-simeq);
    \path [dashed] (k-simeq) edge (2-r-sim);
    \path [dashed] (k-simeq) edge (2-simeq);
    \path (2-r-sim) edge (1-r-simeq);
    \path (2-r-sim) edge (2-sim);
    \path (2-simeq) edge (1-r-simeq);
    \path (2-simeq) edge (2-sim);
    \path (1-r-simeq) edge (1-r-sim);
    \path (1-r-simeq) edge (1-simeq);
    \path (2-sim) edge (1-r-sim);
    \path (2-sim) edge (1-simeq);
    \path (1-r-sim) edge (1-sim);
    \path (1-simeq) edge (1-sim);
  \end{tikzpicture}
  \caption{\label{fi:spectrum}%
    The quantitative linear-time--branching-time spectrum
    from~\cite{DBLP:journals/tcs/FahrenbergL14}.  The nodes are
    different system distances, and an edge $d_1\longrightarrow d_2$
    or $d_1\dashrightarrow d_2$ indicates that
    $d_1( s, t)\ge d_2( s, t)$ for all states $s$, $t$, and that $d_1$
    and $d_2$ in general are topologically inequivalent.}
\end{figure}
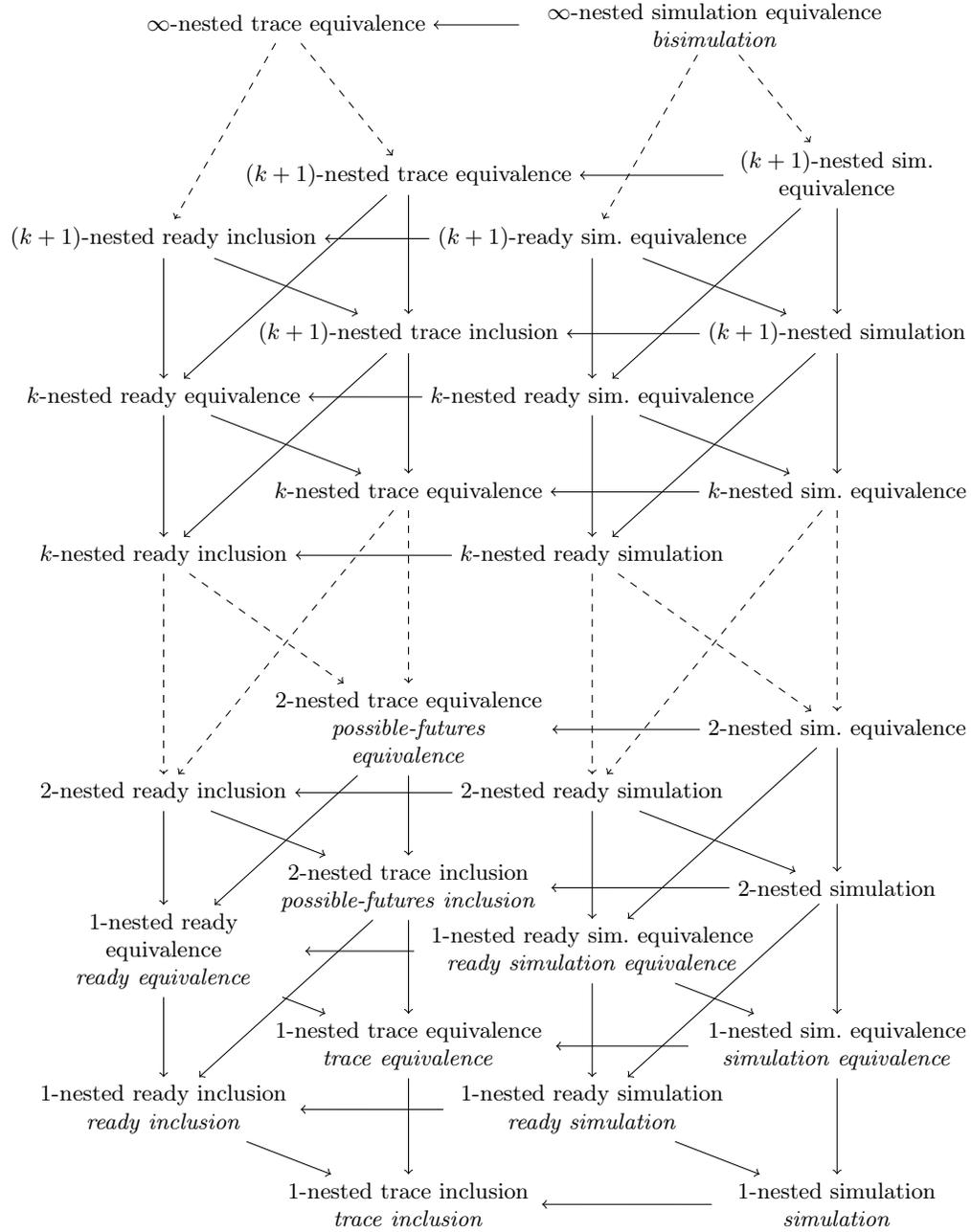

What is missing in~\cite{DBLP:journals/tcs/FahrenbergL14} are actual
\emph{algorithms} for computing the different types of distances.
(The fixed-point equations mentioned above are generally defined over
infinite lattices, hence Tarski's fixed-point theorem does not help
here.)  In this paper, we take a different route to compute them.  We
translate the general quantitative games used
in~\cite{DBLP:journals/tcs/FahrenbergL14} to other, path-building
games.  We show that under mild conditions, this translation can
always be effectuated, and that for all common trace distances, the
resulting path-building games can be solved using various methods
which we develop.

We start the paper by reviewing the quantitative games used to define
linear and branching distances
in~\cite{DBLP:journals/tcs/FahrenbergL14} in Section~\ref{se:lbdist}.
Then we show the reduction to path-building games in
Section~\ref{se:reduc} and apply this to show how to compute all
common branching distances in Section~\ref{se:comput}.  We collect our
results in the concluding section~\ref{se:conc}.  The contributions of
this paper are the following:
\begin{enumerate}[$(1)$]
\item \label{en:contri.reduce} A general method to reduce quantitative
  bisimulation-type games to path-building games.  The former can be
  posed as \emph{double} path-building games, where the players
  alternate to build \emph{two} paths; we show how to transform such
  games into a form where the players instead build \emph{one} common
  path.
\item \label{en:contri.solve} A collection of methods for solving
  different types of path-building games.  Standard methods are
  available for solving discounted games and mean-payoff games; for
  other types we develop new methods.
\item The application of the methods in~\eqref{en:contri.solve} to
  compute various types of distances between labeled transition
  systems defined by the games of~\eqref{en:contri.reduce}.
\end{enumerate}

\section{Linear and Branching Distances}
\label{se:lbdist}

Let $\Sigma$ be a set of labels.  $\Sigma^\omega$ denotes the set of
infinite traces over $\Sigma$.  We generally count sequences from
index $0$, so that $\sigma=( \sigma_0, \sigma_1,\dotsc)$.
%
Let $\Realni= \Realnn\cup\{ \infty\}$ denote the extended non-negative
real numbers.

\subsection{Trace Distances}
\label{se:trace_distances}

A \emph{trace distance} is a hemimetric $D: \Sigma^\omega\times
\Sigma^\omega\to \Realni$, \ie~a function which satisfies $D( \sigma,
\sigma)= 0$ and $D( \sigma, \tau)+ D( \tau, \upsilon)\ge D( \sigma,
\upsilon)$ for all $\sigma, \tau, \upsilon\in \Sigma^\omega$.

The following is an exhaustive list of different trace distances which
have been used in different applications.  We refer
to~\cite{DBLP:journals/tcs/FahrenbergL14} for more details and
motivation.

\paragraph{The discrete trace distance:}
$D_\textup{disc}( \sigma, \tau)= 0$ if $\sigma= \tau$ and $\infty$
otherwise.  This is equivalent to the standard Boolean setting: traces
are either equal (distance~$0$) or not (distance~$\infty$).

\paragraph{The point-wise trace distance:}
$D_\textup{sup}( \sigma, \tau)= \sup_{ n\ge 0} d( \sigma_n, \tau_n)$,
for any given label distance $d: \Sigma\times \Sigma\to \Realni$.
This measures the greatest individual symbol distance in the traces
and has been used for quantitative verification in, among others,
\cite{DBLP:journals/tse/AlfaroFS09, DBLP:conf/qest/DesharnaisLT08,
  FahrenbergLT10, LarsenFT11-Axioms, DBLP:journals/jlp/ThraneFL10,
  conf/icalp/AlfaroHM03}.

\paragraph{The discounted trace distance:}
$D_+( \sigma, \tau)= \sum_{ n= 0}^\infty \lambda^n d( \sigma_n,
\tau_n)$, for any given \emph{discounting factor}
$\lambda\in[ 0, 1\mathclose[$.  Sometimes also called
\emph{accumulating} trace distance, this accumulates individual symbol
distances along traces, using discounting to adjust the values of
distances further off.  It has been used in, for example,
\cite{FahrenbergLT10, LarsenFT11-Axioms, DBLP:journals/jlp/ThraneFL10,
  DBLP:journals/tcs/CernyHR12}.

\paragraph{The limit-average trace distance:}
$D_\textup{lavg}( \sigma, \tau)= \liminf_{ n\ge 1} \frac1 n \sum_{ i=
  0}^{ n- 1} d( \sigma_i, \tau_i)$.  This again accumulates individual
symbol distances along traces and has been used in, among others,
\cite{conf/csl/ChatterjeeDH08, DBLP:journals/tcs/CernyHR12}.  Both
discounted and limit-average distances are well-known from the theory
of discounted and mean-payoff games~\cite{EhrenfeuchtM79,
  DBLP:journals/tcs/ZwickP96}.
  
\paragraph{The Cantor trace distance:}
$D_\textup{C}( \sigma, \tau)= \frac1{ 1+ \inf\{ n\mid \sigma_n\ne
  \tau_n\}}$.  This measures the (inverse of the) length of the common
prefix of the traces and has been used for verification
in~\cite{DBLP:conf/acsd/DoyenHLN10}.

\paragraph{The maximum-lead trace distance:}
$D_\pm( \sigma, \tau)= \sup_{ n\ge0}\bigl| \sum_{ i= 0}^n( \sigma_i-
\tau_i)\bigr|$.  Here it is assumed that $\Sigma$ admits arithmetic
operations of $+$ and $-$, for instance $\Sigma\subseteq \Real$.  As
this measures differences of accumulated labels along runs, it is
especially useful for real-time systems,
\cf~\cite{DBLP:conf/formats/HenzingerMP05, conf/fit/FahrenbergL12,
  DBLP:journals/jlp/ThraneFL10}.

\subsection{Labeled Transition Systems}

A \emph{labeled transition system} (LTS) over $\Sigma$ is a tuple $(
S, i, T)$ consisting of a set of states $S$, with initial state $i\in
S$, and a set of transitions $T\subseteq S\times \Sigma\times S$.  We
often write $s\tto a t$ to mean $( s, a, t)\in T$.  We say that
$(S,i,T)$ is \emph{finite} if $S$ and $T$ are finite.  We assume our
LTS to be \emph{non-blocking} in the sense that for every state $s\in
S$ there is a transition $( s, a, t)\in T$.

We have shown in~\cite{DBLP:journals/tcs/FahrenbergL14} how any given
trace distance $D$ can be lifted to a quantitative
linear-time--branching-time spectrum of distances on LTS.  This is
done via quantitative games as we shall review below.  The point
of~\cite{DBLP:journals/tcs/FahrenbergL14} was that if the given trace
distance has a recursive formulation, which, as we show
in~\cite{DBLP:journals/tcs/FahrenbergL14}, every commonly used trace
distance has, then the corresponding linear and branching distances
can be formulated as fixed points for certain monotone functionals.

The fixed-point formulation of~\cite{DBLP:journals/tcs/FahrenbergL14}
does not, however, give rise to actual algorithms for computing linear
and branching distances, as it happens more often than not that the
mentioned monotone functionals are defined over infinite lattices.
Concretely, this is the case for all but the point-wise trace
distances in Section~\ref{se:trace_distances}.  Hence other methods
are required for computing them; developing these is the purpose of
this paper.

\subsection{Quantitative Ehrenfeucht-Fra{\"\i}ss{\'e} Games}

We review the quantitative games used
in~\cite{DBLP:journals/tcs/FahrenbergL14} to define different types of
linear and branching distances for any given trace distance $D$.  For
conciseness, we only introduce \emph{simulation games} and
\emph{bisimulation games} here, but similar definitions may be given
for all equivalences and preorders in the linear-time--branching-time
spectrum~\cite{inbook/hpa/Glabbeek01}.

\subsubsection*{Quantitative Simulation Games}

Let $\mcal S=( S, i, T)$ and $\mcal S'=( S', i', T')$ be LTS and $D:
\Sigma^\omega\times \Sigma^\omega\to \Realni$ a trace distance.  The
\emph{simulation game} from $\mcal S$ to $\mcal S'$ is played by two players,
the maximizer and the minimizer.  A play begins with the maximizer
choosing a transition $( s_0, a_0, s_1)\in T$ with $s_0= i$.  Then the
minimizer chooses a transition $( s_0', a_0', s_1')\in T'$ with $s_0'=
i'$.  Now the maximizer chooses a transition $( s_1, a_1, s_2)\in T$,
then the minimizer chooses a transition $( s_1', a_1', s_2')\in T'$,
and so on indefinitely.  Hence this is what should be called a
\emph{double path-building game}: the players each build,
independently, an infinite path in their respective LTS.

A~\emph{play} hence consists of two infinite paths, $\pi$ starting
from $i$, and $\pi'$ starting from $i'$.  The~\emph{utility} of this
play is the distance $D( \sigma, \sigma')$ between the traces
$\sigma$, $\sigma'$ of the paths $\pi$ and $\pi'$, which the maximizer
wants to maximize and the minimizer wants to minimize.  The
\emph{value} of the game is, then, the utility of the play which
results when both maximizer and minimizer are playing
optimally.

To formalize the above intuition, we define a \emph{configuration} for
the maximizer to be a pair $( \pi, \pi')$ of finite paths of equal
length, $\pi$ in $\mcal S$ and starting in $i$, $\pi'$ in $\mcal S'$ starting
in $i'$.  The intuition is that this covers the \emph{history} of a
play; the choices both players have made up to a certain point in the
game.  Hence a configuration for the minimizer is a similar pair $(
\pi, \pi')$ of finite paths, but now $\pi$ is one step longer than
$\pi'$.

A \emph{strategy} for the maximizer is a mapping from maximizer
configurations to transitions in $\mcal S$, fixing the maximizer's
choice of a move in the given configuration.  Denoting the set of
maximizer configurations by $\conf$, such a strategy is hence a
mapping $\theta: \conf\to T$ such that for all $( \pi, \pi')\in \conf$
with $\theta( \pi, \pi')=( s, a, t)$, we have $\pend( \pi)= s$.  Here
$\pend( \pi)$ denotes the last state of $\pi$.  Similarly, and
denoting the set of minimizer configurations by $\conf'$, a strategy
for the minimizer is a mapping $\theta': \conf'\to T'$ such that for
all $( \pi, \pi')\in \conf'$ with $\theta'( \pi, \pi')=( s', a', t')$,
$\pend( \pi')= s'$.

Denoting the sets of these strategies by $\Theta$ and $\Theta'$,
respectively, we can now define the \emph{simulation distance} from
$\mcal S$ to $\mcal S'$ induced by the trace distance $D$, denoted
$D^\textup{sim}( \mcal S, \mcal S')$, by
\begin{equation*}
  D^\textup{sim}( \mcal S, \mcal S')= \adjustlimits \sup_{ \theta\in \Theta}
  \inf_{ \theta'\in \Theta'} D( \sigma( \theta, \theta'), \sigma'(
  \theta, \theta'))\,,
\end{equation*}
where $\sigma( \theta, \theta')$ and $\sigma'( \theta, \theta')$ are
the traces of the paths $\pi( \theta, \theta')$ and $\pi'( \theta,
\theta')$ induced by the pair of strategies $( \theta, \theta')$.

\begin{remark}
  \label{re:simulation_game}
  If the trace distance $D$ is discrete, \ie~$D= D_\textup{disc}$ as
  in Section~\ref{se:trace_distances}, then the quantitative game
  described above reduces to the well-known \emph{simulation
    game}~\cite{DBLP:conf/banff/Stirling95}: The only choice the
  minimizer has for minimizing the value of the game is to always
  choose a transition with the same label as the one just chosen by
  the maximizer; similarly, the maximizer needs to try to force the
  game into states where she can choose a transition which the
  minimizer cannot match.  Hence the value of the game will be $0$ if
  the minimizer always can match the maximizer's labels, that is, iff
  $\mcal S$ is simulated by $\mcal S'$.
\end{remark}

\subsubsection*{Quantitative Bisimulation Games}

There is a similar game for computing the \emph{bisimulation distance}
between LTS $\mcal S$ and $\mcal S'$.  Here we give the maximizer the
choice, at each step, to either choose a transition from $s_k$ as
before, or to ``switch sides'' and choose a transition from $s_k'$
instead; the minimizer then has to answer with a transition on the
other side.

Hence the players are still building two paths, one in each LTS, but
now they are \emph{both} contributing to \emph{both} paths.  The
utility of such a play is still the distance between these two paths,
which the maximizer wants to maximize and the minimizer wants to
minimize.  The \emph{bisimulation distance} between $\mcal S$ and
$\mcal S'$, denoted $D^\textup{bisim}( \mcal S, \mcal S')$, is then
defined to be the value of this quantitative bisimulation game.


\begin{remark}
  If the trace distance $D= D_\textup{disc}$ is discrete, then using
  the same arguments as in Remark~\ref{re:simulation_game}, we see
  that $D_\textup{disc}^\textup{bisim}( \mcal S, \mcal S')= 0$ iff $\mcal S$ and
  $\mcal S'$ are \emph{bisimilar}.  The game which results being played is
  precisely the bisimulation game
  of~\cite{DBLP:conf/banff/Stirling95}, which also has been introduced
  by Fra{\"\i}ss{\'e}~\cite{Fraisse54} and
  Ehrenfeucht~\cite{Ehrenfeucht61} in other contexts.
\end{remark}

\subsubsection*{The Quantitative Linear-Time--Branching-Time Spectrum}

The above-defined quantitative simulation and bisimulation games can
be generalized using different methods.  One is to introduce a
\emph{switch counter} $\sco$ into the game which counts how often the
maximizer has switched sides during an ongoing game.  Then one can
limit the maximizer's capabilities by imposing limits on $\sco$: if
the limit is $\sco= 0$, then the players are playing a simulation
game; if there is no limit ($\sco\le \infty$), they are playing a
bisimulation game.  Other limits $\sco\le k$, for $k\in \Nat$, can be
used to define \emph{$k$-nested simulation distances}, generalizing
the equivalences and preorders
from~\cite{DBLP:journals/iandc/GrooteV92,
  DBLP:journals/jacm/HennessyM85}.

Another method of generalization is to introduce \emph{ready moves}
into the game.  These consist of the maximizer challenging her
opponent by switching sides, but only requiring that the minimizer
match the chosen transition; afterwards the game finishes.  This can
be employed to introduce the \emph{ready simulation distance}
of~\cite{DBLP:conf/popl/LarsenS89} and, combined with the switch
counter method above, the \emph{ready $k$-nested simulation distance}.
We refer to~\cite{DBLP:journals/tcs/FahrenbergL14} for further details
on these and other variants of quantitative (bi)simulation games.

For reasons of exposition, we will below introduce our reduction to
path-building games only for the quantitative simulation and
bisimulation games; but all our work can easily be transferred to the
general setting of~\cite{DBLP:journals/tcs/FahrenbergL14}.

\section{Reduction}
\label{se:reduc}

In order to compute simulation and bisimulation distances, we
translate the games of the previous section to path-building games
{\`a} la Ehrenfeucht-Mycielski \cite{EhrenfeuchtM79}.  Let
$D: \Sigma^\omega\times \Sigma^\omega\to \Realni$ be a trace distance,
and assume that there are functions
$\val_D: \Realni^\omega\to \Realni$ and
$f_D: \Sigma\times \Sigma\to \Realni$ for which it holds, for all
$\sigma, \tau\in \Sigma^\infty$, that
\begin{equation}
  \label{eq:DtoGf}
  D( \sigma, \tau)= \val_D( 0, f_D( \sigma_0, \tau_0), 0, f_D(
  \sigma_1, \tau_1), 0,\dotsc)\,.
\end{equation}
We will need these functions in our translation, and we show in
Section~\ref{se:examples-red} below that they exist for all common trace
distances.

\subsection{Simulation Distance}

Let $\mcal S=( S, i, T)$ and $\mcal S'=( S', i', T')$ be LTS.  We
construct a turn-based game
$\mcal U= \mcal U( \mcal S, \mcal S')=( U, u_0, \mato)$ as follows,
with $U= U_1\cup U_2$:
\begin{gather*}
  U_1= S\times S'\qquad U_2= S\times S'\times \Sigma\qquad u_0=( i, i')
  \\[.5ex]
  \begin{aligned}
    \mato &= \{( s, s')\ato 0( t, s', a)\mid( s, a, t)\in T\} \\[-.5ex]
    &\quad\cup \{( t, s', a)\ato{ f_D( a, a')}( t, t')\mid( s', a',
    t')\in T'\}
  \end{aligned}
\end{gather*}
This is a two-player game.  We again call the players maximizer and
minimizer, with the maximizer controlling the states in $U_1$ and the
minimizer the ones in $U_2$.  Transitions are labeled with extended
real numbers, but as the image of $f_D$ in $\Realni$ is finite, the
set of transition labels in $U$ is finite.

The game on $\mcal U$ is played as follows.  A play begins with the
maximizer choosing a transition $(u_0,a_0,u_1)\in \mato$ with $u_0=
i$.  Then the minimizer chooses a transition $(u_1,a_1,u_2)\in \mato$.
Then the maximizer chooses a transition $(u_2,a_2,u_3)\in \mato$, and
so on indefinitely (note that $\mcal U$ is non-blocking).  A play thus
induces an infinite path $\pi=(u_0,a_0,u_1), (u_1,a_1,u_2),\dotsc$ in
$\mcal U$ with $u_0= i$.  The goal of the maximizer is to maximize the
value $\val_D( \mcal U):= \val_D(a_0, a_1,\dotsc)$ of the trace of $\pi$;
the goal of the minimizer is to minimize this value.

This is hence a path-building game, variations of which (for different
valuation functions) have been studied widely in both economics and
computer science since Ehrenfeucht-Mycielski's~\cite{EhrenfeuchtM79}.
Formally, configurations and strategies are given as follows.  A
configuration of the maximizer is a path $\pi_1$ in $\mcal U$ with
$\pend( \pi_1)\in U_1$, and a configuration of the minimizer is a path
$\pi_2$ in $\mcal U$ with $\pend( \pi_2)\in U_2$.  Denote the sets of
these configurations by $\conf_1$ and $\conf_2$, respectively.  A
strategy for the maximizer is, then, a mapping
$\theta_1: \conf_1\to \mato$ such that for all $\pi_1\in \conf_1$ with
$\theta_1( \pi_1)=( u, x, v)$, $\pend( \pi_1)= u$.  A strategy for the
minimizer is a mapping $\theta_2: \conf_2\to \mato$ such that for all
$\pi_2\in \conf_2$ with $\theta_2( \pi_2)=( u, x, v)$,
$\pend( \pi_2)= u$.  Denoting the sets of these strategies by
$\Theta_1$ and $\Theta_2$, respectively, we can now define
\begin{equation*}
  \val_D( \mcal U)= \adjustlimits \sup_{ \theta_1\in \Theta_1}
  \inf_{ \theta_2\in \Theta_2} \val_D( \sigma( \theta_1,
  \theta_2))\,,
\end{equation*}
where $\sigma( \theta_1, \theta_2)$ is the trace of the path $\pi(
\theta_1, \theta_2)$ induced by the pair of strategies $( \theta_1,
\theta_2)$.

By the next theorem, the value of $\mcal U$ is precisely the simulation
distance from $\mcal S$ to $\mcal S'$.

\begin{theorem}
  \label{th:sim=u}
  For all LTS $\mcal S$, $\mcal S'$, $D^\textup{sim}( \mcal S, \mcal S')= \val_D( \mcal U(
  \mcal S, \mcal S'))$.
\end{theorem}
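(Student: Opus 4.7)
The plan is to exhibit a natural bijection between plays of the quantitative simulation game on $(\mcal S, \mcal S')$ and plays of the path-building game on $\mcal U$, under which the two utility functions coincide pointwise; then the sup-inf formulas match.

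First, I would set up the correspondence on configurations. A maximizer configuration $(\pi, \pi')$ of the simulation game consists of two equal-length finite paths $\pi = (s_0, a_0, s_1), \dots, (s_{k-1}, a_{k-1}, s_k)$ in $\mcal S$ with $s_0 = i$ and $\pi' = (s_0', a_0', s_1'), \dots, (s_{k-1}', a_{k-1}', s_k')$ in $\mcal S'$ with $s_0' = i'$. I associate to it the unique $\mcal U$-path starting at $u_0 = (i, i')$ that visits, in order, the states
\begin{equation*}
  (s_0, s_0'),\ (s_1, s_0', a_0),\ (s_1, s_1'),\ (s_2, s_1', a_1),\ \dots,\ (s_k, s_k'),
\end{equation*}
which is well-defined and forced by the definition of $\mato$. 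This path ends in $U_1$, so it lies in $\conf_1$. A minimizer configuration (where $\pi$ is one step longer than $\pi'$) maps analogously to a path ending in $U_2$, i.e.\ in $\conf_2$. It is routine to check that this assignment is a bijection $\conf \to \conf_1$ and $\conf' \to \conf_2$, respecting the ``one-step-extension'' relation.

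Using these bijections, I obtain bijections $\Theta \leftrightarrow \Theta_1$ and $\Theta' \leftrightarrow \Theta_2$: given $\theta \in \Theta$ with $\theta(\pi, \pi') = (s_k, a_k, s_{k+1})$, the corresponding strategy $\theta_1$ at the associated configuration picks the unique $\mato$-transition from $(s_k, s_k')$ to $(s_{k+1}, s_k', a_k)$, which has label $0$; and symmetrically for $\theta' \leftrightarrow \theta_2$, where $\theta_2$ picks the $\mato$-transition from $(s_{k+1}, s_k', a_k)$ to $(s_{k+1}, s_{k+1}')$ with label $f_D(a_k, a_k')$. Given any pair $(\theta, \theta') \in \Theta \times \Theta'$ and the corresponding pair $(\theta_1, \theta_2) \in \Theta_1 \times \Theta_2$, the induced plays are in lockstep: the traces $\sigma(\theta,\theta') = (a_0, a_1, \dots)$ and $\sigma'(\theta,\theta') = (a_0', a_1', \dots)$ of the original game give rise exactly to the $\mcal U$-trace
\begin{equation*}
  \sigma(\theta_1, \theta_2) = (0,\ f_D(a_0, a_0'),\ 0,\ f_D(a_1, a_1'),\ 0,\ \dots)\,.
\end{equation*}
Applying the defining property \eqref{eq:DtoGf} of $\val_D$ and $f_D$ then yields $D(\sigma(\theta,\theta'), \sigma'(\theta,\theta')) = \val_D(\sigma(\theta_1, \theta_2))$.

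Finally, since the bijection $\Theta \times \Theta' \to \Theta_1 \times \Theta_2$ preserves the utility value and since for each $\theta \in \Theta$ the image set $\{\theta'\mid \theta' \in \Theta'\}$ corresponds precisely to the set $\Theta_2$ that the infimum ranges over, I conclude
\begin{equation*}
  D^\textup{sim}(\mcal S, \mcal S') = \sup_{\theta \in \Theta} \inf_{\theta' \in \Theta'} D(\cdots) = \sup_{\theta_1 \in \Theta_1} \inf_{\theta_2 \in \Theta_2} \val_D(\sigma(\theta_1,\theta_2)) = \val_D(\mcal U)\,.
\end{equation*}
The main obstacle I anticipate is purely bookkeeping: verifying that the bijection between configurations is well-defined (in particular that every path in $\conf_1$, resp.\ $\conf_2$, arises from a unique simulation-game configuration, which uses the bipartite structure of $\mcal U$ and the fact that from each $U_1$-state the only outgoing moves are labeled $0$ and encode exactly the choice of a transition in $T$), and that this bijection commutes with the one-step extensions used to apply strategies. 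Once this is established, the utility equality is immediate from \eqref{eq:DtoGf}.
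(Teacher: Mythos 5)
Your proposal is correct and follows essentially the same route as the paper's own proof: the interleaved state sequence you assign to a configuration is exactly the paper's bijection $\phi_1$ (and $\phi_2$ for the minimizer), the induced strategy bijections are the paper's $\psi_1, \psi_2$, and the utility equality via \eqref{eq:DtoGf} and the concluding sup-inf manipulation are identical. No gaps.
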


\begin{proof}
  Write $\mcal S=( S, i, T)$ and $\mcal S'=( S', i', T')$.  Informally, the
  reason for the equality is that any move $( s, a, t)\in T$ of the
  maximizer in the simulation distance game can be copied to a move $(
  s, s')\ato{ 0}( t, s', a)$, regardless of $s'$, in $\mcal U$.
  Similarly, any move $( s', a', t')$ of the minimizer can be copied
  to a move $( t, s', a)\ato{ f_D( a, a')}( t, t')$, and all the moves
  in $\mcal U$ are of this form.

  To turn this idea into a formal proof, we show that there are
  bijections between configurations and strategies in the two games,
  and that under these bijections, the utilities of the two games are
  equal.  For $( \pi, \pi')\in \conf$ in the simulation distance game,
  with $\pi=( s_0, a_0, s_1),\dotsc,$ $( s_{ n- 1}, a_{ n- 1}, s_n)$
  and $\pi'=( s_0', a_0', s_1'),\dotsc,( s_{ n- 1}', a_{ n- 1}',
  s_n')$, define
  \begin{multline*}
    \phi_1( \pi, \pi')=(( s_0, s_0'), 0,( s_1, s_0', a_0)),(( s_1,
    s_0', a_0), f_D( a_0, a_0'),( s_1, s_1')),\dotsc, \\
    (( s_n, s_{ n- 1}', a_{ n- 1}), f_D( a_{ n- 1}, a_{ n- 1}'),( s_n,
    s_n'))\,.
  \end{multline*}
  It is clear that this defines a bijection $\phi_1: \conf\to
  \conf_1$, and that one can similarly define a bijection $\phi_2:
  \conf'\to \conf_2$.

  Now for every strategy $\theta: \conf\to T$ in the simulation
  distance game, define a strategy $\psi_1( \theta)= \theta_1\in
  \Theta_1$ as follows.  For $\pi_1\in \conf_1$, let $( \pi, \pi')=
  \phi_1^{ -1}( \pi_1)$ and $s'= \pend( \pi')$.  Let $\theta( \pi,
  \pi')=( s, a, t)$ and define $\theta_1( \pi_1)=(( s, s'), 0,( t, s',
  a))$.  Similarly we define a mapping $\psi_2: \Theta'\to \Theta_2$ as
  follows.  For $\theta': \conf'\to T'$ and $\pi_2\in \conf_2$, let $(
  \pi, \pi')= \phi_2^{ -1}( \pi_2)$ with $\pi=( s_0, a_0,
  s_1),\dotsc,( s_n, a_n, s_{ n+ 1})$.  Let $\theta'( \pi, \pi')=( s',
  a', t')$ and define $\psi_2( \theta')( \pi_2)=((s_{ n+ 1}, s', a_n),
  f_D( a_n, a'),( s_{ n+ 1}, t'))$.

  It is clear that $\psi_1$ and $\psi_2$ indeed map strategies in the
  simulation distance game to strategies in $\mcal U$ and that both are
  bijections.  Also, for each pair $( \theta, \theta')\in \Theta\times
  \Theta'$, $D( \sigma( \theta, \theta'), \sigma'( \theta, \theta'))=
  \val_D( \sigma( \psi_1( \theta), \psi_2( \theta')))$ by
  construction.  But then
  \begin{align*}
    D^\textup{sim}( \mcal S, \mcal S') &= \adjustlimits \sup_{ \theta\in
      \Theta} \inf_{ \theta'\in \Theta'} D( \sigma( \theta, \theta'),
    \sigma'( \theta, \theta')) \\
    &= \adjustlimits \sup_{ \theta\in \Theta} \inf_{ \theta'\in
      \Theta'} \val_D( \sigma( \psi_1( \theta), \psi_2( \theta'))) \\
    &= \adjustlimits \sup_{ \theta_1\in \Theta_1} \inf_{ \theta_2\in
      \Theta_2} \val_D( \sigma( \theta_1, \theta_2))= \val_D( \mcal U)\,,
  \end{align*}
  the third equality because $\psi_1$ and $\psi_2$ are bijections. \qed
\end{proof}

\subsection{Examples}
\label{se:examples-red}

  We show that the reduction applies to all trace distances from
  Section~\ref{se:trace_distances}.
  \begin{enumerate}
  \item \label{ex:red.discr} For the discrete trace distance $D=
    D_\textup{disc}$, we let
    \begin{equation*}
      \val_D( x)= \sum_{ n= 0}^\infty x_n\,, \qquad f_D( a, b)=
      \begin{cases}
        0 &\text{if } a= b\,, \\
        \infty &\text{otherwise}\,,
      \end{cases}
    \end{equation*}
    then~\eqref{eq:DtoGf} holds.  In the game on $\mcal U$, the minimizer
    needs to play $0$-labeled transitions to keep the distance at $0$.
  \item \label{ex:red.pw} For the point-wise trace distance $D=
    D_\textup{sup}$, we can let
    \begin{equation*}
      \val_D( x)= \sup_{ n\ge 0} x_n\,, \qquad f_D( a, b)= d( a, b)\,.
    \end{equation*}
    Hence the game on $\mcal U$ computes the sup of a trace.
  \item \label{ex:red.disc} For the discounted trace distance $D=
    D_+$, let
    \begin{equation*}
      \val_D( x)= \sum_{ n= 0}^\infty \sqrt \lambda^n x_n\,, \qquad
      f_D( a, b)= \sqrt \lambda\, d( a, b)\,,
    \end{equation*}
    then~\eqref{eq:DtoGf} holds.  Hence the game on $\mcal U$ is a
    standard discounted game~\cite{DBLP:journals/tcs/ZwickP96}.
  \item \label{ex:red.lavg} For the limit-average trace distance $D=
    D_\textup{lavg}$, we can let
    \begin{equation*}
      \val_D( x)= \liminf_{ n\ge 1} \frac
      1 n \sum_{ i= 0}^{ n- 1} x_i\,, \qquad f_D( a, b)= 2 d( a, b)\,;
    \end{equation*}
    we will show below that~\eqref{eq:DtoGf} holds.  Hence the game on
    $\mcal U$ is a mean-payoff game~\cite{DBLP:journals/tcs/ZwickP96}.
  \item \label{ex:red.cant} For the Cantor trace distance $D=
    D_\textup{C}$, let
    \begin{equation*}
      \val_D( x)= \frac2{ 1+ \inf\{ n\mid x_n\ne
        0\}}\,, \qquad f_D( a, b)=
      \begin{cases}
        0 &\text{if } a= b\,, \\
        1 &\text{otherwise}\,.
      \end{cases}
    \end{equation*}
    The objective of the maximizer in this game is to reach a
    transition with weight $1$ \emph{as soon as possible}.
  \item \label{ex:red.maxl} For the maximum-lead trace distance $D=
    D_\pm$, we can let
    \begin{equation*}
      \val_D( x)= \sup_{ n\ge 0}\bigl| \sum_{ i=
      0}^n x_i\bigr|\,, \qquad f_D( a, b)= a- b\,,
    \end{equation*}
    then~\eqref{eq:DtoGf} holds.
  \end{enumerate}

\subsection{Bisimulation Distance}

We can construct a similar turn-based game to compute the bisimulation
distance.  Let $\mcal S=( S, i, T)$ and $\mcal S'=( S', i', T')$ be
LTS and define $\mcal V= \mcal V( \mcal S, \mcal S')=( V, v_0, \mato)$
as follows, with $V= V_1\cup V_2$:
\begin{gather*}
  V_1= S\times S'\qquad V_2= S\times S'\times \Sigma\times\{ 1,
  2\}\qquad v_0=( i, i') \\[.5ex]
  \begin{aligned}
    \mato &= \{( s, s')\ato 0( t, s', a, 1)\mid( s, a, t)\in T\}
    \\[-.5ex]
    &\quad\cup \{( s, s')\ato 0( s, t', a', 2)\mid( s', a', t')\in
    T'\} \\[-.5ex]
    &\quad\cup \{( t, s', a, 1)\ato{ f_D( a, a')}( t, t')\mid( s', a',
    t')\in T'\} \\[-.5ex]
    &\quad\cup \{( s, t', a', 2)\ato{ f_D( a, a')}( t, t')\mid( s, a,
    t)\in T\}
  \end{aligned}
\end{gather*}
Here we have used the minimizer's states to both remember the label
choice of the maximizer and which side of the bisimulation game she
plays on.  By suitable modifications, we can construct similar games
for all distances in the spectrum
of~\cite{DBLP:journals/tcs/FahrenbergL14}.  The next theorem states
that the value of $\mcal V$ is precisely the bisimulation distance between
$\mcal S$ and $\mcal S'$.

\begin{theorem}
  For all LTS $\mcal S$, $\mcal S'$, $D^\textup{bisim}( \mcal S, \mcal S')= \val_D(
  \mcal V( \mcal S, \mcal S'))$.
\end{theorem}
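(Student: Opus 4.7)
The plan is to mirror the proof of Theorem~\ref{th:sim=u}, adapting the bijective correspondence between configurations and strategies so that the maximizer's additional freedom to switch sides is accounted for by the tag $\{1,2\}$ that appears in the minimizer states of $\mcal V$.

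First, I would formalize the bisimulation game analogously to the simulation game. A configuration for the maximizer is a pair $(\pi, \pi')$ of finite paths of equal length, starting in $i$ and $i'$ respectively; a configuration for the minimizer is such a pair extended by one additional transition on either the $\mcal S$\nobreakdash-side or the $\mcal S'$\nobreakdash-side together with a tag $j\in\{1,2\}$ recording which side the maximizer just played. A maximizer strategy picks, at each maximizer configuration, a transition in $T$ or in $T'$; a minimizer strategy picks, at each minimizer configuration tagged with $j$, a matching transition on the opposite side. The utility of the play induced by a strategy pair is $D(\sigma,\sigma')$ for the traces of the two resulting paths, and $D^\textup{bisim}(\mcal S,\mcal S')$ is the $\sup\inf$ of these utilities.

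Next, I would define bijections $\phi_1\colon \conf \to \conf_1$ and $\phi_2\colon \conf' \to \conf_2$ between bisimulation-game configurations and path-building configurations in $\mcal V$. A bisimulation play producing paths $\pi,\pi'$ where at step $k$ the maximizer played on side $j_k\in\{1,2\}$ with label $a_k$ from $\mcal S$ or $a_k'$ from $\mcal S'$, and the minimizer answered on the other side with $a_k'$ or $a_k$ respectively, is mapped to the path in $\mcal V$ that alternates
\begin{equation*}
  (s_k,s_k')\ato{0} (\cdot,\cdot,\cdot,j_k) \ato{f_D(a_k,a_k')} (s_{k+1},s_{k+1}')\,,
\end{equation*}
where the intermediate $V_2$-state is $(s_{k+1},s_k',a_k,1)$ if $j_k=1$ and $(s_k,s_{k+1}',a_k',2)$ if $j_k=2$. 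Because the tag $j_k$ together with the label is preserved in the $V_2$-state, this mapping is invertible: reading off the tag recovers which side was the maximizer's, and the recorded label together with the $f_D$-weight on the outgoing transition recovers both labels $a_k,a_k'$. The bijections $\psi_1\colon \Theta\to\Theta_1$, $\psi_2\colon\Theta'\to\Theta_2$ on strategies are then defined by transporting through $\phi_1,\phi_2$ exactly as in the proof of Theorem~\ref{th:sim=u}, with the added bookkeeping that a maximizer move on side $j$ is encoded as the $0$-weighted transition from $(s,s')$ into the appropriate tagged $V_2$-state.

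Finally, I would verify that under these bijections, the trace of $\pi(\psi_1(\theta),\psi_2(\theta'))$ in $\mcal V$ is exactly the interleaving $0, f_D(a_0,a_0'), 0, f_D(a_1,a_1'),\dotsc$ corresponding to the two paths in the bisimulation play, so by equation~\eqref{eq:DtoGf}, $\val_D(\sigma(\psi_1(\theta),\psi_2(\theta'))) = D(\sigma(\theta,\theta'),\sigma'(\theta,\theta'))$. The theorem then follows by the same three-line $\sup\inf$ calculation as before. The only delicate point, and the one I would spend most care on, is ensuring that the encoding of the maximizer's side choice through the $V_2$-tag $j\in\{1,2\}$ gives a genuine bijection on strategies; once that is set up correctly, the rest of the argument is identical to the simulation case.
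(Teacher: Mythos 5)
Your proposal is correct and follows essentially the same route as the paper: the paper's proof is likewise a transport of the argument for Theorem~\ref{th:sim=u} through configuration/strategy bijections, with the key observation---which you also single out---that the tag $j\in\{1,2\}$ stored in the $V_2$-states records which side the maximizer played on and thereby forces the minimizer's $\mcal V$-moves to correspond exactly to answers on the opposite side, making $\psi_2$ a bijection. No substantive difference in approach.
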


\begin{proof}
  This proof is similar to the one of Theorem~\ref{th:sim=u}, only
  that now, we have to take into account that the maximizer may
  ``switch sides''.  The intuition is that maximizer moves $( s, a,
  t)$ in the $\mcal S$ component of the bisimulation distance games are
  emulated by moves $( s, s')\ato 0( t, s', a, 1)$, maximizer moves $(
  s', a', t')$ in the $\mcal S'$ component are emulated by moves $( s,
  s')\ato 0( s, t', a', 2)$, and similarly for the minimizer.  The
  values $1$ and $2$ in the last component of the $V_2$ states ensure
  that the minimizer only has moves available which correspond to
  playing in the correct component in the bisimulation distance game
  (\ie~that $\psi_2$ is a bijection). \qed
\end{proof}

\section{Computing the Values of Path-Building Games}
\label{se:comput}

We show here how to compute the values of the different path-building
games which we saw in the last section.  This will give us algorithms
to compute all simulation and bisimulation distances associated with
the trace distances of Section~\ref{se:trace_distances}.

We will generally only refer to the games $\mcal U$ for computing
simulation distance here, but the bisimulation distance games $\mcal V$
are very similar, and everything we say also applies to them.

\paragraph{Discrete distance:}

The game to compute the discrete simulation distances is a
reachability game, in that the goal of the maximizer is to force the
minimizer into a state from which she can only choose $\infty$-labeled
transitions.  We can hence solve them using the standard
controllable-predecessor operator defined, for any set $S\subseteq
U_1$ of maximizer states, by
\begin{equation*}
  \cpre( S)=\{ u_1\in U_1\mid \exists u_1\ato 0 u_2: \forall u_2\ato x
  u_3: u_3\in S\}\,.
\end{equation*}

Now let $S\subseteq U_1$ be the set of states from which the maximizer
can force the game into a state from which the minimizer only has
$\infty$-labeled transitions, \ie
\begin{equation*}
  S=\{ u_1\in U_1\mid \exists u_1\ato
  0 u_2: \forall u_2\ato x u_3: x= \infty\}\,,
\end{equation*}
and compute $S^*= \cpre^*( S)= \bigcup_{ n\ge 0} \cpre^n( S)$.  By
monotonicity of $\cpre$ and as the subset lattice of $U_1$ is complete
and finite, this computation finishes in at most $| U_1|$ steps.

\begin{lemma}
  $\val_D( \mcal U)= 0$ iff $u_0\notin S^*$.
\end{lemma}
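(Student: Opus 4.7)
The plan is to view the game on $\mcal U$ in the discrete case as a standard safety/reachability game and then use the well-known characterisation of winning regions via iterated $\cpre$. Observe first that for $D = D_\textup{disc}$ every edge is labeled $0$ or $\infty$ and $\val_D$ is the sum of edge labels, so for any play $\pi$ we have $\val_D(\sigma(\pi)) \in \{0, \infty\}$. Consequently $\val_D(\mcal U) = 0$ iff the minimizer has a strategy guaranteeing that every edge traversed in any resulting play is $0$-labeled, and $\val_D(\mcal U) > 0$ iff the maximizer has a strategy forcing at least one $\infty$-labeled edge to be traversed.

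For the direction $u_0 \notin S^* \Rightarrow \val_D(\mcal U) = 0$, I would construct a minimizer strategy $\theta_2$ that maintains the invariant ``the current maximizer-state is not in $S^*$''. If the play is in some $u_1 \notin S^*$ and the maximizer moves via $u_1 \ato 0 u_2$, then by definition of $S$ we have $u_1 \notin S$, and by $\cpre$-closure of $S^*$ under the ``exists move, for all responses'' quantifier pattern, the minimizer can pick a response $u_2 \ato x u_3$ with $x \ne \infty$ (hence $x = 0$) and $u_3 \notin S^*$. More precisely, since $u_1 \notin \cpre^{n+1}(S) \supseteq S$, it is not the case that for the maximizer's chosen move $u_1 \ato 0 u_2$ every response leads into $S^*$ or has weight $\infty$; the minimizer therefore has a $0$-weight response staying outside $S^*$. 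Iterating, every edge on the play has weight $0$, so $\val_D = 0$.

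For the converse $u_0 \in S^* \Rightarrow \val_D(\mcal U) > 0$, I use an attractor-style strategy for the maximizer: if $u_0 \in \cpre^n(S) \setminus \cpre^{n-1}(S)$ pick a witnessing move $u_0 \ato 0 u_2$ such that all responses $u_2 \ato x u_3$ satisfy $u_3 \in \cpre^{n-1}(S)$; if $u_0 \in S$ itself, pick the witnessing move into an ``all-$\infty$'' minimizer state. A straightforward induction on $n$ shows that this strategy forces the play through states of strictly decreasing rank until reaching $S$, at which point the minimizer is forced to traverse an $\infty$-edge. Hence the play's value is $\infty > 0$.

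The only place requiring care is the bookkeeping in the inductive step of the minimizer-strategy construction: one must check that the $\cpre$ definition with its interleaving of existential/universal quantifiers over the two successive transitions correctly matches the two-move round of the game (a maximizer move followed by a minimizer move), and that termination of the iteration $\cpre^*$ at level $|U_1|$ guarantees $S^*$ is a fixed point so the invariant can indeed be maintained indefinitely. Both are routine given the finiteness of $U_1$ and monotonicity of $\cpre$ stated above.
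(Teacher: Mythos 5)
Your overall route is the same as the paper's: treat the discrete game on $\mcal U$ as a reachability game, observe that every play has value in $\{0,\infty\}$, and argue via the controllable-predecessor iteration, with an attractor strategy for the maximizer and an invariant-preserving strategy for the minimizer. The maximizer direction ($u_0\in S^*\Rightarrow\val_D(\mcal U)=\infty$) is sound. The gap is exactly at the step you flag as delicate, in the minimizer direction: from $u_1\notin S$ you obtain \emph{some} response to the maximizer's chosen move with weight $\ne\infty$, and from $u_1\notin\cpre(S^*)$ you obtain \emph{some} response landing outside $S^*$, but nothing guarantees these are the \emph{same} response. Your sentence ``it is not the case that \dots every response leads into $S^*$ or has weight $\infty$'' negates a universally quantified \emph{disjunction}, and that does not follow from the conjunction of the two separate negations you actually have. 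Hence the invariant ``stay outside $S^*$ while traversing only $0$-edges'' cannot be maintained by this argument: the minimizer may be forced to choose between paying $\infty$ now and stepping into $S^*$.

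This is not a presentational slip that more care would repair: with $S$, $\cpre$ and $S^*$ exactly as defined, the claimed equivalence can fail. Take $\Sigma=\{a,b,c\}$, $\mcal S$ with the single transition $i\tto a i$, and $\mcal S'$ with transitions $i'\tto a q$, $i'\tto b r$, $q\tto c q$, $r\tto a r$. One computes $S=\cpre(S)=S^*=\{(i,q)\}$, so $u_0=(i,i')\notin S^*$; yet at the minimizer state $(i,i',a)$ the only responses are an $\infty$-edge to $(i,r)$ and a $0$-edge into $(i,q)\in S$, so the maximizer forces an $\infty$-labeled edge in every play and $\val_D(\mcal U)=\infty$ (consistently, $D_\textup{disc}^\textup{sim}(\mcal S,\mcal S')=\infty$ because $\mcal S'$ has no trace $a^\omega$). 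The repair is to fold the weight condition into the predecessor operator, i.e.\ to iterate
\begin{equation*}
  \cpre'( X)=\{ u_1\in U_1\mid \exists u_1\ato 0 u_2: \forall u_2\ato x u_3: x= \infty \text{ or } u_3\in X\}
\end{equation*}
from $X=\emptyset$ and use its least fixed point in place of $S^*$; then $u_1\notin\cpre'(X)$ yields, for every maximizer move, a single response that is simultaneously $0$-weighted and outside $X$, and both of your strategy constructions go through. Be aware that the paper's own one-line proof elides the same point, asserting that $u_0\in S^*$ iff the maximizer can force the game into a state from which the minimizer has \emph{only} $\infty$-labeled transitions and that this equals $\val_D(\mcal U)=\infty$; the example above shows the maximizer can win without ever forcing such a state.
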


\begin{proof}
  As we are working with the discrete distance, we have either
  $\val_D( \mcal U)= 0$ or $\val_D( \mcal U)= \infty$.  Now
  $u_o\in S^*$ iff the maximizer can force, using finitely many steps,
  the game into a state from which the minimizer only has
  $\infty$-labeled transitions, which is the same as $\val_D( \mcal
  U)= \infty$. \qed
\end{proof}

\paragraph{Point-wise distance:}

To compute the value of the point-wise simulation distance game, let
$W=\{ w_1,\dotsc, w_m\}$ be the (finite) set of weights of the
minimizer's transitions, ordered such that $w_1<\dotsm< w_m$.  For
each $i= 1,\dotsc, m$, let $S_i=\{ u_1\in U_1: \exists u_1\ato 0 u_2:
\forall u_2\ato x u_3: x\ge w_i\}$ be the set of maximizer states from
which the maximizer can force the minimizer into a transition with
weight at least $w_i$; note that $S_m\subseteq S_{ m-
  1}\subseteq\dotsm\subseteq S_1= U_1$.  For each $i= 1,\dotsc, m$,
compute $S_i^*= \cpre^*( S_i)$, then $S_m^*\subseteq S_{ m-
  1}^*\subseteq\dotsm\subseteq S_1^*= U_1$.

\begin{lemma}
  Let $p$ be the greatest index for which $u_0\in S_p^*$, then
  $p= \val_D( \mcal U)$.
\end{lemma}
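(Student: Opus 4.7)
The plan is to establish the two inequalities $\val_D( \mcal U)\ge w_p$ and $\val_D( \mcal U)\le w_p$ separately; since every transition weight in $\mcal U$ lies in $\{ 0\}\cup W$ and the objective is a supremum, matching upper and lower bounds force equality.

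For $\val_D( \mcal U)\ge w_p$, I would unfold the fixed-point definition $u_0\in S_p^*= \cpre^*( S_p)$ to synthesise an explicit maximiser strategy. Since $U_1$ is finite, there is a least $n$ with $u_0\in \cpre^n( S_p)$. At any maximiser state $u_1\in \cpre^n( S_p)$ with $n> 0$ the strategy plays a witness transition $u_1\ato 0 u_2$ forcing every minimiser response back into $\cpre^{ n- 1}( S_p)$; at $u_1\in S_p$ it plays the witness transition from the definition of $S_p$, so that any minimiser response carries weight $\ge w_p$. Induction on $n$ shows that along every resulting play some minimiser transition of weight $\ge w_p$ appears, hence $\val_D\ge w_p$.

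For $\val_D( \mcal U)\le w_p$, the case $p= m$ is immediate as $w_m$ dominates every weight. Otherwise $p< m$ and $u_0\notin S_{ p+ 1}^*$. I would describe a minimiser strategy that preserves the invariant ``the current maximiser state lies outside $S_{ p+ 1}^*$'' together with ``every weight played so far is $\le w_p$''. The crucial verification is that whenever $u_1\notin S_{ p+ 1}^*$ and the maximiser moves $u_1\ato 0 u_2$, the minimiser has a response $u_2\ato x u_3$ with $x\le w_p$ \emph{and} $u_3\notin S_{ p+ 1}^*$. From $u_1\notin S_{ p+ 1}$ one obtains an outgoing transition from $u_2$ of weight $\le w_p$; from $u_1\notin \cpre( S_{ p+ 1}^*)$ one obtains an outgoing transition from $u_2$ whose target lies outside $S_{ p+ 1}^*$. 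The two requirements are reconciled by unfolding $\cpre^*( S_{ p+ 1})$ one more iterate: a $u_2$-successor of weight $\ge w_{ p+ 1}$ combined with all remaining successors falling into $S_{ p+ 1}^*$ would place $u_1$ itself in $S_{ p+ 1}^*$, contradicting the hypothesis. Inductively maintaining the invariant yields a play whose trace has supremum $\le w_p$.

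The main obstacle is this final combinatorial step in the upper-bound direction: naively, $u_1\notin S_{ p+ 1}$ and $u_1\notin \cpre( S_{ p+ 1}^*)$ guarantee only the existence of two possibly distinct witnessing transitions from $u_2$, whereas the argument needs one transition that is simultaneously of small weight and lands outside $S_{ p+ 1}^*$. Making this precise requires care with the fixed-point construction of $S_{ p+ 1}^*$ and with how ``weight $\ge w_{ p+ 1}$'' contributions interact with reachability of $S_{ p+ 1}$; the rest of the argument is routine strategy transport and preservation of the invariant along the play.
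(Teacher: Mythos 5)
Your lower-bound direction is sound: the attractor strategy extracted from $u_0\in S_p^*$ does force a minimizer transition of weight at least $w_p$, giving $\val_D(\mcal U)\ge w_p$. The genuine gap is exactly the step you flag as the ``main obstacle'' in the upper bound, and the reconciliation you propose does not work. You claim that a $u_2$-successor of weight $\ge w_{p+1}$ together with all remaining successors falling into $S_{p+1}^*$ would place $u_1$ in $S_{p+1}^*$. Under the definitions at hand this is false: membership in $S_{p+1}$ requires \emph{all} responses from $u_2$ to have weight $\ge w_{p+1}$, and membership in $\cpre(S_{p+1}^*)$ requires \emph{all} responses to land in $S_{p+1}^*$; the mixed case satisfies neither disjunct, so $u_1$ can lie outside $S_{p+1}^*$ even though every single minimizer response is ``bad'' in one of the two senses. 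Concretely, take $\Sigma=\{a,b,c\}$ with $d(x,x)=0$, $d(a,b)=1$, $d(a,c)=d(b,c)=5$; let $\mcal S$ have transitions $i\tto a s$, $s\tto c s$ and $\mcal S'$ have $i'\tto b t_1$, $i'\tto a t_2$, $t_1\tto c t_1$, $t_2\tto b t_2$. Here $\val_D(\mcal U)=1$: the minimizer answers $a$ by $b$ at cost $1$ and then matches every $c$ at cost $0$, whereas answering by $a$ costs $0$ now but $5$ on every later step. Yet $W=\{0,1,5\}$ and $u_0=(i,i')$ lies in no $S_k^*$ with $w_k>0$, because the minimizer state $(s,i',a)$ has one response of weight $1$ leading to $(s,t_1)\notin S_2^*$ and one of weight $0$ leading to $(s,t_2)\in S_2^*$, so $(i,i')$ belongs neither to $S_2$ nor to any $\cpre^n(S_2)$. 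Hence the invariant-preserving minimizer strategy you describe cannot exist from such states, and in fact the statement fails for the operators as literally defined, so no proof along these lines can close the gap.

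The repair is to fuse the two conditions into a single one-step operator $F_{p+1}(X)=\{u_1\in U_1\mid\exists u_1\ato 0 u_2:\forall u_2\ato x u_3: x\ge w_{p+1}\text{ or } u_3\in X\}$ and to let $S_{p+1}^*$ be its least fixed point: this is the correct controllable predecessor for the objective ``some transition of weight at least $w_{p+1}$ eventually occurs''. With that change both of your strategy constructions go through essentially verbatim --- in particular the complement of the fixed point is now preserved by the minimizer, since a state outside it admits, for every maximizer move, a single response violating \emph{both} disjuncts --- and in the example above the corrected computation returns $w_2=1$ as it should. You should also be aware that the paper's own one-line proof makes the very same unjustified identification of ``the maximizer can force a $w_{p+1}$-weighted transition'' with ``the maximizer can force the play into $S_{p+1}$''; the obstacle you isolated is real, and it is not resolved in the source either.
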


\begin{proof}
  For any $k$, we have $u_0\in S_k^*$ iff the maximizer can force,
  using finitely many steps, the game into a state from which the
  minimizer only has transitions with weight at least $w_k$.  Thus
  $u_0\in S_p^*$ iff (1)~the maximizer can force the minimizer into a
  $w_p$-weighted transition; (2)~the maximizer \emph{cannot} force the
  minimizer into a $w_{ p+ 1}$-weighted transition. \qed
\end{proof}

\paragraph{Discounted distance:}

The game to compute the discounted simulation distance is a standard
discounted game and can be solved by standard
methods~\cite{DBLP:journals/tcs/ZwickP96}.

\paragraph{Limit-average distance:}

For the limit-average simulation distance game, let $( y_n)_{ n\ge 1}$
be the sequence $( 1, 1, \frac32, 1, \frac54,\dotsc)$ and note that
$\lim_{ n\to \infty} y_n= 1$.  Then
\begin{align*}
  \val_D( x)= \val_D( x) \lim_{ n\to \infty} y_n &= \liminf_{ n\ge 1}
  \frac{ y_n}{ n} \sum_{ i= 0}^{ n- 1} x_i \\
  &= \liminf_{ 2k\ge 1} \frac 1{ 2k} \sum_{ i= 0}^{ k- 1} f_D(
  \sigma_i, \tau_i) \\
  &= \liminf_{ k\ge 1} \frac 1 k \sum_{ i= 0}^{ k- 1} d( \sigma_i,
  \tau_i)= D_\textup{lavg}( \sigma, \tau)\,,
\end{align*}
so, indeed, \eqref{eq:DtoGf}~holds.  The game is a standard
mean-payoff game and can be solved by standard methods, see for
example~\cite{DBLP:conf/valuetools/DhingraG06}.

\paragraph{Cantor distance:}

To compute the value of the Cantor simulation distance game, let
$S_1\subseteq U_1$ be the set of states from which the maximizer can
force the game into a state from which the minimizer only has
$1$-labeled transitions, \ie~$S_1=\{ u_1\in U_1\mid \exists u_1\tto 0
u_2: \forall u_2\tto x u_3: x= 1\}$.  Now recursively compute $S_{ i+
  1}= S_i\cup \cpre( S_i)$, for $i= 1, 2,\dotsc$, until $S_{ i+ 1}=
S_i$ (which, as $S_i\subseteq S_{ i+ 1}$ for all $i$ and $U_1$ is
finite, will happen eventually).  Then $S_i$ is the set of states from
which the maximizer can force the game to a $1$-labeled minimizer
transition which is at most $2i$ steps away.  Hence $\val_D( \mcal U)= 0$
if there is no $p$ for which $u_0\in S_p$, and otherwise $\val_D(
\mcal U)= \frac1 p$, where $p$ is the least index for which $u_0\in S_p$.

\paragraph{Maximum-lead distance:}

For the maximum-lead simulation distance game, we note that the
maximizer wants to maximize
$\sup_{ n\ge 0}\bigl| \sum_{ i= 0}^n x_i\bigr|$, \ie~wants the
accumulated values $\sum_{ i= 0}^n x_i$ or $-\sum_{ i= 0}^n x_i$ to
exceed any prescribed bounds.  A weighted game in which one player
wants to keep accumulated values inside some given bounds, while the
opponent wants to exceed these bounds, is called an
\emph{interval-bound energy game}.  It is shown
in~\cite{DBLP:conf/formats/BouyerFLMS08} that solving general
interval-bound energy games is EXPTIME-complete.

We can reduce the problem of computing maximum-lead simulation
distance to an interval-bound energy game by first
non-deterministically choosing a bound $k$ and then checking whether
player~1 wins the interval-bound energy game on $\mcal U$ for bounds
$[ -k, k]$.  (There is a slight problem in that
in~\cite{DBLP:conf/formats/BouyerFLMS08}, energy games are defined
only for \emph{integer}-weighted transition systems, whereas we are
dealing with real weights here.  However, it is easily seen that the
results of~\cite{DBLP:conf/formats/BouyerFLMS08} also apply to
\emph{rational} weights and bounds; and as our transition systems are
finite, one can always find a sound and complete rational
approximation.)

We can thus compute maximum-lead simulation distance in
non-deterministic exponential time; we leave open for now the question
whether there is a more efficient algorithm.


\section{Conclusion and Future Work}
\label{se:conc}

We sum up our results in the following corollary which gives the
complexities of the decision problems associated with the respective
distance computations.  Note that the first part restates the
well-known fact that simulation and bisimulation are decidable in
polynomial time.

\begin{corollary}
  \mbox{}
  \begin{enumerate}
  \item Discrete simulation and bisimulation distances are computable
    in PTIME.
  \item Point-wise simulation and bisimulation distances are
    computable in PTIME.
  \item Discounted simulation and bisimulation distances are
    computable in $\text{NP}\cap\text{coNP}$.
  \item Limit-average simulation and bisimulation distances are
    computable in $\text{NP}\cap\text{coNP}$.
  \item Cantor simulation and bisimulation distances are computable in
    PTIME.
  \item Maximum-lead simulation and bisimulation distances are
    computable in \linebreak NEXPTIME.
  \end{enumerate}
\end{corollary}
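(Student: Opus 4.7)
The plan is to treat each of the six items separately, combining the reductions of Section~\ref{se:reduc} with the algorithms of Section~\ref{se:comput}. The uniform first step in every case is to observe that the constructed games $\mcal U(\mcal S,\mcal S')$ and $\mcal V(\mcal S,\mcal S')$ have state sets of size $O(|S|\cdot|S'|\cdot(1+|\Sigma|))$ and transition sets linear in $|T|+|T'|$, hence polynomial in the size of the input LTS. Theorems~\ref{th:sim=u} and its bisimulation analogue then reduce each distance computation to evaluating $\val_D$ on a concrete, polynomial-sized turn-based game.

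For items (1), (2) and (5) I would verify that the algorithms sketched in Section~\ref{se:comput} run in polynomial time. For the discrete case, $\cpre$ is computable in time linear in the number of edges, and the fixed point $S^*$ stabilises in at most $|U_1|$ iterations, so the whole procedure is polynomial. For the point-wise case, one observes that the set $W$ of distinct weights appearing in $\mcal U$ has cardinality at most $|T'|$ (respectively bounded by the edges of $\mcal V$), so only polynomially many nested fixed points $S_i^* = \cpre^*(S_i)$ need be computed, each again in polynomial time; reading off the largest index $p$ with $u_0\in S_p^*$ is then trivial. The Cantor case is analogous: the ascending chain $S_1\subseteq S_2\subseteq\dotsm$ stabilises within $|U_1|$ steps, and locating the least $p$ with $u_0\in S_p$ determines $\val_D(\mcal U)$ in polynomial time.

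For items (3) and (4), the reductions in Section~\ref{se:examples-red}\eqref{ex:red.disc} and~\eqref{ex:red.lavg} present $\val_D(\mcal U)$ as, respectively, the value of a discounted game with discount $\sqrt\lambda$ and edge weights $\sqrt\lambda\, d(a,a')$, and the value of a mean-payoff game with weights $2d(a,a')$. Both classes are known to admit optimal positional strategies and to have decision problems in $\text{NP}\cap\text{coNP}$ (see~\cite{DBLP:journals/tcs/ZwickP96, DBLP:conf/valuetools/DhingraG06}); since our games have polynomial size, the desired bounds transfer directly.

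For item (6) I would use the reduction to interval-bound energy games described at the end of Section~\ref{se:comput}: an algorithm guesses a rational bound $k$ (of polynomial bit-length, by the finiteness remark in that discussion) and then checks whether the maximizer wins the interval-bound energy game on $\mcal U$ for $[-k,k]$. Invoking the EXPTIME algorithm of~\cite{DBLP:conf/formats/BouyerFLMS08} on this polynomially-sized instance places the whole procedure in NEXPTIME. The main subtlety here, and the part I expect to require the most care, is to justify that restricting to rational weights and bounds is sound and that the non-deterministic guess of $k$ can be bounded; both follow from the fact that our LTS are finite, so only finitely many candidate bounds need to be considered and each has polynomial encoding. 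With this in place, the six items together yield the corollary.
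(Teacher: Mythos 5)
Your proposal is correct and follows essentially the same route as the paper: the corollary is just the summary of Sections~\ref{se:reduc} and~\ref{se:comput}, namely that the games $\mcal U$ and $\mcal V$ have polynomial size and that each valuation function is handled by the corresponding algorithm (iterated $\cpre$ for the discrete, point-wise and Cantor cases; standard discounted and mean-payoff solvers for items (3) and (4); guess-a-bound plus the EXPTIME interval-bound energy game algorithm for item (6)). Your only addition is making the polynomial size of the constructed games and the bound on the number of distinct weights explicit, which the paper leaves implicit.
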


In the future, we intend to expand our work to also cover
\emph{quantitative specification theories}.  Together with several
coauthors, we have in~\cite{DBLP:conf/facs2/FahrenbergKLT14,
  DBLP:journals/soco/FahrenbergKLT18} developed a comprehensive
setting for satisfaction and refinement distances in quantitative
specification theories.  Using our work
in~\cite{DBLP:conf/sofsem/FahrenbergL17} on a qualitative
linear-time--branching-time spectrum of specification theories, we
plan to introduce a quantitative linear-time--branching-time spectrum
of specification distances and to use the setting developed here to
devise methods for computing them through path-building games.

Another possible extension of our work contains \emph{probabilistic}
systems, for example the probabilistic automata
of~\cite{DBLP:conf/concur/SegalaL94}.  A possible starting point for
this is~\cite{DBLP:conf/birthday/BreugelW14} which uses simple
stochastic games to compute probabilistic bisimilarity.

\bibliographystyle{plain}

\end{document}